
\documentclass{article}
\usepackage[utf8]{inputenc}
\usepackage{amsmath, amssymb, mathtools, graphicx, amsthm, dsfont, float, bbm, hyperref, caption, adjustbox}
\usepackage{xcolor}
\usepackage{comment}
\usepackage{cleveref}
\usepackage{booktabs}
\usepackage[round]{natbib}


\renewcommand{\le}{\leqslant}
\renewcommand{\ge}{\geqslant}
\renewcommand{\leq}{\leqslant}
\renewcommand{\geq}{\geqslant}
\newcommand{\phz}{\phantom{0}}

\newcommand{\giv}{\!\mid\!}

\newcommand{\rd}{\,\mathrm{d}}

\newtheorem{theorem}{Theorem}

\newtheorem{lemma}{Lemma}
\theoremstyle{definition}
\newtheorem{definition}{Definition}

\newcommand{\tr}{\mathrm{tr}}
\newcommand{\cov}{\mathrm{Cov}}
\newcommand{\var}{\mathrm{Var}}
\newcommand{\dunif}{\mathcal{U}}
\newcommand{\dnorm}{\mathcal{N}}

\newcommand{\bbR}{\mathbb{R}}
\newcommand{\bbP}{\mathbb{P}}
\newcommand{\bbE}{\mathbb{E}}
\newcommand{\Var}{\text{Var}}

\newcommand{\st}{\text{ } | \text{ }}
\newcommand{\eps}{\varepsilon}

\newcommand{\indic}{\mathds{1}}

\newcommand{\simiid}{\stackrel{\mathrm{iid}}\sim}
\newcommand{\mycomment}[1]{}

\title{Multivariate Tie-breaker Designs}
\author{Tim P. Morrison\\Stanford University \and Art B. Owen\\Stanford University }
\date{October 2024}

\begin{document}

\maketitle

\begin{abstract}
In a tie-breaker design (TBD), subjects with high values of a running variable are given some (usually desirable) treatment, subjects with low values are not, and subjects in the middle are randomized. TBDs are intermediate between regression discontinuity designs (RDDs) and randomized controlled trials (RCTs).
    TBDs allow a tradeoff between the resource allocation efficiency of an RDD and the statistical efficiency of an RCT. We study a model where the expected response is one multivariate regression for treated subjects and another for control subjects. We propose a prospective D-optimality, analogous to Bayesian optimal design, to understand design tradeoffs without reference to a specific data set. For given covariates, we show how to use convex optimization to choose treatment probabilities that optimize this criterion. We can incorporate a variety of constraints motivated by economic and ethical considerations. In our model, D-optimality for the treatment effect coincides with D-optimality for the whole regression, and, without constraints, an RCT is globally optimal. 
    We show that a monotonicity constraint favoring more deserving subjects induces sparsity in the number of distinct treatment probabilities. We apply the convex optimization solution to a semi-synthetic example involving triage data from the MIMIC-IV-ED database.
\end{abstract}

\section{Introduction}

There are many important settings where a costly or scarce
intervention can be made for some but not all subjects.
Examples include giving a scholarship to some students \citep{Angrist2014leveling}, 
intervening to prevent child abuse in some
families \citep{kran:2022}, programs to counter
juvenile delinquency \citep{lips:cord:berg:1981}
and sending some university students to remedial
English classes \citep{aike:west:schw:carr:hsiu:1998}. 
There are also lower-stakes settings where a company might
be able to offer a perk such as a free service upgrade,
to some but not all of its customers.
In settings such as these, there is usually a priority ordering
of subjects, perhaps based on how deserving they are or on how much
they or the investigator might gain from the intervention. 
We can represent that priority order in terms of a real-valued 
running variable $x_i$ for subject $i$.

To maximize the immediate short-term value from the limited
intervention, one can assign it only to subjects with $x_i\ge t$
for some threshold $t$.  The difficulty with this ``greedy'' solution
is that such an allocation makes it difficult to estimate the causal
effect of the treatment. It is possible to use a regression 
discontinuity design (RDD) in this case, comparing subjects with
$x_i$ somewhat larger than $t$ to those with $x_i$ somewhat smaller
than $t$. For details on the RDD, see the comprehensive recent 
survey by \cite{cattaneo2022regression}. The RDD can give a consistent
nonparametric estimate of the treatment effect at $x=t$ \citep{hahntodd}, but
not at other values of $x$. The treatment effect can be studied
at other values of $x$ under an assumed regression model, but then
the variance of the regression coefficients can be large due
to the strong dependence between the running variable and 
the treatment \citep{gelman2019high}.

The RDD is commonly used to analyze observational data for which the  investigator has no control over the treatment cutoff. In the settings we consider, the investigators assign the treatment and can therefore employ some randomization.  The motivating context makes it costly or even ethically difficult to use a randomized controlled trial (RCT) where the treatment is assigned completely at random without regard to $x_i$. The tie-breaker design (TBD) is a compromise between the RCT and RDD.  It is a triage where subjects with large values of $x_i$ get the treatment, those with small values get the control level and those in between are randomized to either treatment or control.

While the tie-breaker design has been known since \cite{camp:1969} it has not been subjected to much analysis.  Most of the theory for the TBD has been for the setting with a scalar $x$ and models that are linear in $x$ and the treatment \citep{owenvarian, li:owen:2022:tr}.  In this paper we study the TBD for a vector-valued predictor. Our first motivation is that many use cases for TBDs will include multiple covariates. Second, although multivariate nonparametric regression models are out of the scope of this paper, we believe that TBD regressions are a useful first step in that direction.  Third,
\cite{gelm:hill:veht:2020} counsel against RDDs
 that do not adjust for pre-treatment variables. The multivariate regression setting supports such adjustments. 

In a tie-breaker design we have a vector of covariates for subject $i$ and
must choose a two-level treatment variable. We then face an atypical experimental design problem where some of the predictors are fixed and not settable by us while one of them is subject to randomization.  This is known as the `marginally restricted design' problem after \cite{cookthibodeau} who studied $D$-optimality in such a setting.  We consider two such settings.  In one setting, the investigator already has the covariate values. The second setting is one step removed, where we consider random covariates that an investigator might later have.

The paper is organized as follows. Section~\ref{sec:setup} outlines our notation and the regression model.
Section~\ref{sec:efficiency} introduces our notions of efficiency and D-optimality in the multivariate regression model.  Theorem~\ref{thm:dopt=dopt} shows that
D-optimality for the treatment effect parameters is equivalent to D-optimality for the full regression.
To study efficiency for future subjects, we use a prospective D-optimality criterion, adapted from Bayesian optimal design, that maximizes expected future information.
Theorem~\ref{thm:rctisoptimal} then shows that the RCT is prospectively D-optimal. We also discuss an example involving Gaussian covariates and a symmetric design, which provides relevant intuition. 
Section~\ref{sec:shorttermgain} finds an expression for the expected short-term gain, with particular focus on this Gaussian case. When the running variable is linear in the covariates, the best linear combination for statistical efficiency is the ``last'' eigenvector of the covariance matrix of covariates while the best linear combination for short-term gain is the true treatment effect. Section~\ref{sec:convex} presents a design strategy based on convex optimization to choose treatment probabilities for given covariates and compares the effects of various practical constraints. In particular, we show that a monotonicity constraint in the treatment probabilities yields solutions with a few distinct treatment probability levels. This is consistent with some past results in constrained experimental design \citep{cookfedorov}, but both our proof and the theorem particulars are distinct. Section~\ref{sec:mimic} illustrates this procedure on a hospital data set from MIMIC-IV-ED about which emergency room patients should receive intensive care.
Section~\ref{sec:discussion} has a brief discussion of some additional context for our results.

\subsection{Tie-breaker designs}

The earliest tie-breaker design of which we are aware  \citep{camp:1969} had a discrete running variable, and literally broke a tie by randomizing the treatment assignments for those subjects with $x=t$. For an imperfectly measured running variable, we might consider $|x-t|<\Delta$ to not be meaningfully large for some $\Delta>0$, so randomizing in that window is like breaking ties.  This is the approach from \cite{boru:1975}.

When we randomize for $x$ with $|x-t|< \Delta$, then setting $\Delta=0$ provides an RDD while $\Delta\to\infty$ provides an RCT.  The TBD transitions smoothly between these extremes.  Many of the early comparisons among these methods use the two-line regression model
\begin{align}\label{eq:twoline}
Y_i = \beta_0 + \beta_1X_i + \gamma_0Z_i +\gamma_1Z_iX_i+\eps_i.
\end{align}
Here, $Y_i$ is the response for subject $i$, $Z_i \in \{-1, 1\}$ is a binary treatment variable, and $\eps_i$ is an IID error term. \cite{goldberger} finds  for $X_i\sim\dnorm(0,1)$ that the RDD has about 2.75 times the variance of an RCT.  \cite{jacob2012practical} consider polynomials in $X$ up to third degree, with or without interactions with $Z$ for Gaussian and for uniform random variables. Neither paper considers tie-breakers. \cite{cappelleri1994power} study the two-line model above (absent the $X$-$Z$ interaction) and compare the RDD, RCT and three intermediate TBD designs. They study the sample size required to reject $\gamma=0$ with sufficient power for those designs and three different sizes of $\gamma$. Larger $\Delta$ brings greater power.

\cite{owenvarian} find an expression for the variance of the parameters in the two-line model for $x$ with a symmetric uniform distribution about a threshold $t=0$, and also for a Gaussian case.  The randomization window is $|x|\le\Delta$, with a larger $\Delta$ yielding smaller variance but a lesser short-term gain.  Their designs have only three levels (0, 50 and 100 percent) for the treatment probability given $x$.  They show that there is no advantage to using other levels or a sliding scale when $x$ has a symmetric distribution and $50$\% of the subjects will get the treatment.

\cite{li:owen:2023} study the two-line regression for a general distribution of $x$ and an arbitrary proportion of treated subjects.  They constrain the global treatment probability and a measure of short-term gain. They find that there exists a D-optimal design under these constraints with a treatment probability that is constant within at most four intervals of $x$ values. Moreover, with the addition of a monotonicity constraint, there exists an optimal solution with just two levels corresponding to $x < t'$ and $x > t'$ for some $t'$. 

\cite{klugerowen} consider tie-breaker design for nonparametric regression for real-valued treatments $x$.  A tie-breaker design can support causal inference about the treatment effect at points within the randomization window, not just at the threshold $x=t$.  At the threshold, the widely use kernel regression methods for RDD become much more efficient if one has sampled from a tie-breaker, because they can use interpolation to $x=t$ instead of the extrapolation from the left and right sides that an RDD requires.

\subsection{Marginally constrained experimental design}
The tie-breaker setup is a special case of marginally constrained experimentation of \cite{cookthibodeau}.  Some more general constraints are considered in \cite{lopez2004optimal}. Some of our findings are special cases of more general results in \cite{nachtsheim1989design}. \cite{heavlin1998columnwise} consider a sequential design setting in which the columns of a design matrix correspond to steps in semi-conductor fabrication, with each step constrained by the prior ones.  The marginally constrained literature generally considers choosing $n_i$ values of the settable variables for level $i$ of the fixed variables.  In our setting, we cannot assume that $n_i>1$.

The TBD setting often has a monotonicity constraint on treatment probabilities that we have not seen in the marginally constrained design literature. Under that constraint a more ``deserving'' subject should never have a lower treatment probability than a less deserving subject has.  This and other constraints in the TBD will often lead to optimal designs with a small number of different treatment levels. 

\subsection{Sequential experimentation}

We anticipate that the tie-breaker design could be used in sequential experimentation.  In our motivating applications, the response is measured long enough after the treatment (e.g., six years in some educational settings) that bandit methods \citep{slivkins2019introduction} are not appropriate.  There is related work on adaptive experimental design.  See for instance \cite{metelkina}.  The problem we focus on is designing the first experiment that one might use, and a full sequential analysis is not in the scope of this paper.

\section{Setup}\label{sec:setup}

Given $n$ subjects with $d$ covariates each, we let $X\in\bbR^{n\times d}$ be the design matrix and $X_i \in \bbR^d$ be the variables for subject $i$. We write $\tilde{X} = \begin{bmatrix} 1 & X \end{bmatrix} \in \bbR^{n \times (d + 1)}$ to denote the matrix with an intercept out front, and $\tilde{X} = [1 \quad X^\top]^\top 
\in\bbR^{d+1}$ to denote its $i$'th row. For ease of notation, we zero-index $\tilde{X}$ so that $\tilde{X}_{i0} = 1$ and $\tilde{X}_{ij} = X_{ij}$, for $j=1,\dots,d$.

  We are interested in the effect of some treatment $Z_i \in \{-1, 1\}$ on a future response $Y_i\in\bbR$ for subject $i$, where $Z_i = 1$ corresponds to treatment and $Z_i = -1$ to control. The design problem is to choose probabilities $p_i\in[0,1]$ and then take $\bbP(Z_i=1)=p_i$.
  This differs from the common experimental design framework in which the covariates $X_i$ can also be chosen. 
  In Section~\ref{sec:convex} we will show how to get optimal $p_i$ by convex optimization.

  To get more general insights into the design problem, in Section~\ref{sec:efficiency} we also consider a random data framework.  The predictors are to be sampled with $X_i\simiid P_X$. This allows us to relate
  design findings to the properties of $P_X$ rather than to a specific matrix $X$. After $X_i$ are observed, $Z_i$ will be set randomly and then $Y_i$ observed. The analyst is unable to alter $P_X$ at all but can choose any function $p(X) = \bbP(Z = 1 \giv X)$ that satisfies the imposed constraints.
 
We work with the following linear model:
\begin{align} \label{eq:linmod}
Y_i = \tilde{X}_i^{\top} \tilde{\beta} + Z_i \tilde{X}_i^{\top} \tilde{\gamma} + \eps_i
\end{align}
for $\tilde{\beta}, \tilde{\gamma} \in \bbR^{d + 1}$, where $\eps_i$ are IID noise terms with mean zero and variance $\sigma^2>0$. We use the same notational convention of writing $\tilde\beta=\begin{bmatrix}\beta_0 &\beta^{\top}\end{bmatrix}^{\top}$  and $\tilde{\gamma} = \begin{bmatrix}\gamma_0 \quad \gamma^{\top}\end{bmatrix}^{\top}$ for $\beta,\gamma \in \bbR^d$ to separate out the intercept term.
We consider $\tilde{\gamma}$ to be the parameter of greatest interest because it captures the treatment effect of~$Z$.

Equation~\eqref{eq:linmod} generalizes the two-line model~\eqref{eq:twoline}
studied by \cite{li:owen:2023} and \cite{owenvarian}.
The latter authors describe some computational methods for the model~\eqref{eq:linmod} but most of their theory is for model~\eqref{eq:twoline}.

Though a strong parametric assumption, this multi-line regression is a helpful and practical model to inform treatment assignment at the design stage. Because our scenario of interest could include regions of the covariate space with $p(x) = 0$, we do not have overlap and cannot rely on classical semiparametric methods for causal inference. Nonparametric generalizations of \eqref{eq:linmod} such as spline models may prove more flexible in highly nonlinear settings, but we do not explore them here. 

In Section~\ref{sec:efficiency}, we also consider multivariate tie-breaker designs (TBDs), which we define as follows: the treatment $Z_i$ is assigned via 
\begin{align} \label{eq:probs}
\mathbb{P}(Z_i = 1 \giv X_i) =
\begin{cases}
1, & X_i^{\top}\eta \ge u \\
p, & X_i^{\top}\eta \in (\ell, u) \\
0, & X_i^{\top} \eta \le \ell
\end{cases}
\end{align}
for parameters $u<\ell$, $p\in(0,1)$ and $\eta \in \bbR^d$. That is, we assign treatment to subject $i$ whenever $X_i^{\top}\eta$ is at or above some upper cutoff $u$, do not assign treatment whenever it is at or below some lower cutoff $\ell$, and randomize at some fixed probability $p$ in the middle. For the case $u = \ell$, we take $\bbP(Z_i=1 \giv X_i)=\indic\{X_i^{\top}\eta\ge u\}$ which has a mild asymmetry in offering the treatment to those subjects, if any, that have $X_i^{\top}\eta=\ell=u$.

In equation~\eqref{eq:probs} we ignore the intercept and just consider $X_i$ instead of $\tilde{X}_i$ since the intercept would merely shift everything by a constant. Here, we use $\eta \in \bbR^d$ instead of $\gamma$ from \eqref{eq:linmod} to reflect that the vector we treat on need not be the same as the true $\gamma$, which is unknown.

In practice, $\eta$ will encode whatever constraints on randomization a practitioner may encounter. While linear constraints do not encompass all possibilities, they are simple and flexible enough to account for a great many that one could feasibly seek to impose; for example, constraints on a heart rate below some cutoff or a total SAT score that is sufficiently high can both be covered by \eqref{eq:probs}. One could also set $\ell$ and $u$ to be known or estimated quantiles of a covariate or a linear combination of them if, for example, one wants deterministic treatment assignment for the top $5\%$ of candidates. More generally, we show in Section \ref{sec:convex} how to accommodate any convex constraints on treatment assignment. 

The assignment~\eqref{eq:probs} greatly generalizes the one in \cite{owenvarian} which had $d=1$, $\ell = -u$, $p = 1/2$, and $P_X$ either $\dunif(-1,1)$ or $\dnorm(0,1)$. In analogy to the one-dimensional case, we refer to the case with $u = \ell$ as an RDD. We refer to any choice of $(\ell, u, \eta)$ for which $\mathbb{P}(X^{\top}\eta \in (\ell, u)) = 1$
as an RCT. 

\section{Efficiency and D-optimality}\label{sec:efficiency}

We assume that the covariates $X_i \in \bbR^d$ have yet to be observed and are drawn from some distribution $P_X$ with a finite and invertible covariance matrix $\Sigma$. The aim is to devise a treatment assignment scheme $p(X)$ with $Z_i$ assigned independently via $\bbP(Z_i = 1 \st X_i) = p(X_i)$ that is optimal in some sense. Random allocation of $Z$ has advantages of fairness and supports some randomization-based inference. 

Section \ref{sssec:fixedx} briefly outlines efficiency and D-optimality in the fixed $Z$ setting to introduce relevant formulas. In Section \ref{sssec:prospective}, we then outline a notion of prospective $D$-optimality used when $X$ is random and we must choose a distribution of $Z$ given $X$. We model our approach on the Bayesian optimal design literature \citep{chaloner1995bayesian}, in which the model parameters are treated as random, which we describe in more detail in that section. 

The treatment of $X$ as random permits a theoretical discussion about what is expected to happen depending on the distributional properties of the unseen $X$ data. However, the case in which the covariates are actually fixed is also covered by the procedure in Section \ref{sssec:prospective}, since it corresponds to a point mass distribution on $X$. We return to this case in Section \ref{sec:convex}, in which we show how to use convex optimization to obtain prospectively D-optimal $p(X_i)$. 

\subsection{D-optimality for nonrandom \texorpdfstring{$X$}{X} and \texorpdfstring{$Z$}{Z}}\label{sssec:fixedx}

In this brief section, we outline our notions of efficiency and D-optimality in the setting in which $X$ is fixed and $Z$ are assigned non-randomly. While not the focus of this paper, it will allow us to motivate various definitions and derive a property of D-optimality in the multivariate model that will be useful going forward. 

We begin by conditioning on $X$ and $Z$. Let $D \in \bbR^{n \times n}$ be the 
diagonal matrix whose diagonal entries are $D_{ii} = Z_i$. We can write the linear model \eqref{eq:linmod} in matrix form as $Y = U \delta + \eps$, where $U = \begin{bmatrix} \tilde{X} & D\tilde{X} \end{bmatrix}$  and $\delta = \begin{bmatrix} \tilde{\beta}^{\top} &\tilde{\gamma}^{\top} \end{bmatrix}^{\top}$. 

In the general model~\eqref{eq:linmod},
conditionally on $X$ and $Z$ we have
\begin{align*} \var(\hat\delta \giv X, Z) &=
\begin{bmatrix}
\var(\hat\beta \giv X, Z) & \cov(\hat\beta,\hat\gamma \giv X, Z)\\
\cov(\hat\gamma,\hat\beta \giv X, Z) & \var(\hat\gamma \giv X, Z)
\end{bmatrix} \\
&\equiv \sigma^2 \begin{bmatrix}
((U^{\top}U)^{-1})_{11} & ((U^{\top}U)^{-1})_{12}\\
((U^{\top}U)^{-1})_{21} & ((U^{\top}U)^{-1})_{22},
\end{bmatrix}. \end{align*}
where we assume here that $U$ is full rank. Because $\sigma^2$ is merely a multiplicative factor independent of all relevant parameters, it is no loss of generality to take $\sigma^2 = 1$ going forward for simplicity. The treatment effect vector $\tilde{\gamma}$ is our parameter of primary interest, so we want to minimize a measure of the magnitude of
$((U^{\top}U)^{-1})_{22}$.
When $X$ is fixed, a standard choice would be to assign $\{Z_1, \ldots, Z_n\}$ to minimize the $D$-optimality criterion of 
$$\det\bigl(((U^{\top}U)^{-1})_{22}\bigr)=\prod_{j=1}^{d + 1}\lambda_j(\var(\hat\gamma \giv Z)),$$
where $\lambda_j(\cdot)$ is the $j$th eigenvalue of its argument. D-optimality is the most studied design choice among many alternatives \cite[Chapter 10]{atkinson2007optimum}.
It has the convenient property of being invariant under reparametrizations. This criterion is actually a particular case of D$_S$-optimality, in which only the parameters corresponding to a subset $S$ of the indices are of interest. See Section 10.3 of \cite{atkinson2007optimum}. Much of our theory and discussion will generalize to a broader class of criteria that we discuss in Section \ref{sssec:prospective}.

Before proceeding to the setting in which $X$ and $Z$ are random, we note a helpful result. Under the model~\eqref{eq:linmod}, there is a convenient property of D-optimality in this setting, which we state as the following simple theorem.

\begin{theorem}\label{thm:dopt=dopt}
For data following model~\eqref{eq:linmod}, assume that $U$ is full rank.
Then the D-optimality criteria for $\tilde{\gamma}$, $\tilde\beta$ and $\delta$ are equivalent.
\end{theorem}
\begin{proof}
D-optimality for $\tilde\gamma$ comes from minimizing $\det((U^{\top}U)^{-1})_{22})$ while D-optimality for $\delta$ comes from maximizing $\det(U^\top U)$. To show that those are equivalent, we write
\begin{align} \label{eq:uTu}
U^{\top}U =
\begin{bmatrix}
\tilde{X}^{\top} \tilde{X} & \tilde{X}^{\top} D \tilde{X} \\
\tilde{X}^{\top} D \tilde{X} & \tilde{X}^{\top} D^2\tilde{X} \end{bmatrix} =
\begin{bmatrix}
\tilde{X}^{\top} \tilde{X} & \tilde{X}^{\top} D \tilde{X} \\
\tilde{X}^{\top} D \tilde{X} & \tilde{X}^{\top} \tilde{X} \end{bmatrix}
\equiv
\begin{bmatrix}
A & B\\
B & A
\end{bmatrix}
\end{align}
using $D^2 = I$ in the second equality.
In the decomposition above, $A$ and $B$ are symmetric with $A$ invertible and so, from properties of block matrices
\begin{align*}
\det(U^{\top}U) &= \det(A) \det(A - BA^{-1}B) \quad \text{and}\\
((U^{\top}U)^{-1})_{22} &= (A - BA^{-1}B)^{-1},
\end{align*}
from which
\begin{equation} \label{matrixequiv} \det\left(U^{\top}U\right) = \frac{\det(\tilde{X}^{\top}\tilde{X})}{\det(((U^{\top}U)^{-1})_{22})}. \end{equation}
Because $\tilde X^{\top}\tilde X$ does not depend on Z,
our D-optimality criterion is equivalent to maximizing $\det(U^{\top}U)$. Finally, D-optimality for $\tilde\beta$ and $\tilde\gamma$ are equivalent by symmetry.
\end{proof}
The equivalence \eqref{matrixequiv} is well-known in the D$_S$-optimality literature (see, e.g., Equation 10.7 of \cite{atkinson2007optimum}). In our setting, only the right half of the columns of $U$ can be changed. Then the numerator in Equation \eqref{matrixequiv} is conveniently fixed and so optimizing the denominator alone optimizes the ratio.

The simple structure of the model~\eqref{eq:linmod} has made $D$-optimal estimation of $\tilde\gamma$ equivalent to D-optimality for $\delta$ and for $\tilde\beta$.
By the same token, a design that is A-optimal
for $\tilde\gamma$, minimizing $\tr(\var(\hat{\tilde\gamma}))$ is also A-optimal for $\delta$ and for $\tilde\beta$. Lemma 1 of \cite{nachtsheim1989design} includes our setting and shows that $D$-optimality for $\delta$ is equivalent to $D$-optimality for $\tilde\gamma$.  It does not apply to $\tilde\beta$ for our problem nor does that result consider other criteria such as A-optimality.

The theory of marginally restricted D-optimality in \cite{cookthibodeau} describes some settings where the D-optimal design for all variables is a tensor product of the given empirical design for the fixed variables and a randomized design for the settable variables. Such a design is simply an RCT on the settable variables. By their Lemma 1, this holds when the regression model is a Kronecker product of functions of fixed variables times functions of settable variables. In their Lemma 3, this holds when the regression model has an intercept plus a sum of functions of settable variables and a sum of functions of fixed variables.  Neither of those apply to model~\eqref{eq:linmod} but Lemma 2 of \cite{nachtsheim1989design} does.  The TBD designs we consider usually have constraints on the short-term gain or monotonicity constraints, and those generally make RCTs non-optimal.

\subsection{Prospective D-optimality}\label{sssec:prospective}

In this section, we modify the approach in Section \ref{sssec:fixedx} to account for the randomness in $X$ and $Z$. To do so, we adopt a prospective D-optimality criterion to apply to the setting where both $X_i$ and $Z_i$ have yet to be observed. We derive our approach from ideas in Bayesian optimal design, which we briefly summarize below based on Chapter 18.2 of \cite{atkinson2007optimum}, omitting details that will not play a role in our setting.

Bayesian optimal design often arises when the variance of the parameter estimate depends on the unknown true value of the parameter $\theta$, as is the usual case for models where the expected response is a nonlinear function of $\theta$. In this case, the information matrix $M=M(\theta)$ is usually constructed by linearizing the model form and then taking the expected outer product of the gradient with itself under a design measure on the predictors. This $M(\theta)$ is random because $\theta$ has a prior distribution. In our case, $M = U^{\top}U$ since this is the information matrix for the multivariate regression. 

The favored approach in Bayesian optimal design is to choose the design in order to minimize $\bbE( \log \det (M^{-1}))$ where the expectation is over the prior distribution on the parameters \citep{chaloner1995bayesian, dette1996}. That can be quite expensive to do.  Many of the examples in the literature optimize the design over a grid which is reasonable when the dimensions of $X$ and $\theta$ are both small, but those methods do not scale well to larger problems. 

Table 18.1 of \cite{atkinson2007optimum} lists four additional criteria along with the above one.  They are
$\log\bbE(\det(M^{-1}))$,
$\log\det(\bbE(M^{-1}))$,
$\log(\bbE(\det(M))^{-1})$, and
$\log(\det\bbE(M))^{-1}$. The objective becomes more tractable each time a  nonlinear operation is taken out of the expectation. When the logarithm is the final step, the criterion is equivalent to not taking that logarithm.

We choose the last of those four quantities (choice V in their Table 18.1) for our definition of prospective $D$-optimality.

\begin{definition}\label{prospectivedopt}[Prospective D-optimality]
For random predictors $X$, a design function $\bbP(Z=1\giv x)=p(x)$ is \textit{prospectively D-optimal} if it maximizes 
$$\det(\bbE(U^{\top}U)) = \det(\bbE[ \var(\hat\delta \giv X, Z)^{-1}]),$$
where the expectation is with respect to $X$ and $Z$.
\end{definition}

We could analogously define prospective D-optimality for $\tilde{\beta}$ or $\tilde{\gamma}$ as minimizing $\det((\bbE[U^{\top}U]^{-1})_{11})$ or $\det((\bbE[U^{\top}U]^{-1})_{22})$, respectively. By Theorem~\ref{thm:dopt=dopt}, prospective D-optimality in the sense of Definition \ref{prospectivedopt} is equivalent to these conditions in our model, so the three notions all align. 

Our choice is known as EW D-optimality in Bayesian design for generalized linear models (GLMs) \citep{YMM2016, Bu2020}. The E is for expectation and the $W$ refers to a weight matrix arising in GLMs. It is valued for its significantly reduced computational cost. While a computationally efficient design is not necessarily the best one, \cite{YMM2016} and \cite{YTM2017} find in a range of simulation studies that EW D-optimal designs tend to have strong performance under the more standard Bayesian D-optimality criterion as well. 

Since our prospective D-optimality criterion uses the expected value of $U^{\top}U$, it depends only on the $p_i$ terms and not on the joint distribution of the $Z_i$. It is therefore important that the $Z_i$ be independent in order to distinguish, say, an RCT with $\bbP(Z_i=1)=1/2$ from an allocation that takes all $Z_i=Z_1$ where $Z_1=1$ with probability $1/2$. In addition, \cite{morr:etal:2024} show in a similar problem that, when the $Z_i$ are sampled independently, the design that minimizes $\det(\bbE[M])$ also minimizes $\bbE[\det(M)]$ asymptotically as $n \to \infty$.
\cite{klugerowen} consider some stratified sampling methods that incorporate negative correlations among the $Z_i$ which makes $U^{\top}U$ come even closer to its expectation than it does under independent sampling.

Under sampling with $X_i\sim P_X$, define
\begin{align*} \tilde{\Sigma} &= \bbE\Bigl(\frac{1}{n} \tilde{X}^{\top}\tilde{X}\Bigr) \\
&= \begin{bmatrix}
1 & \bbE[X_{\bullet}]^{\top} \\
\bbE[X_{\bullet}] & \bbE[X_{\bullet}X_{\bullet}^{\top}] \end{bmatrix}, \end{align*}
where the bullet subscript denotes an arbitrary subject with $X_\bullet\sim P_X$ and $\bbP(Z_{\bullet} = 1 \st X_{\bullet}) = p(X_{\bullet})$. In addition,
\begin{align*}
\bbE\Bigl(\frac{1}{n} (\tilde{X}^{\top}D\tilde{X})_{jk}\Bigr)
&= \bbE\Bigl(\frac{1}{n} \sum_{i=1}^{n} Z_i \tilde{X}_{ij} \tilde{X}_{ik} \Bigr)\\
&= \bbE[\tilde{X}_{\bullet j} \tilde{X}_{\bullet k}(2 p(X_{\bullet}) - 1)] \end{align*}
Now let $N$ be the matrix with
\begin{align} \label{eq:generalNdef} N_{jk} &= \bbE[\tilde{X}_{\bullet j} \tilde{X}_{\bullet k}(2p(X_{\bullet}) - 1)]. \end{align}
Under our sampling assumptions
\begin{align} \label{eq:uTuapprox}
\bbE\Bigl(\frac{1}{n} U^{\top}U\Bigr) =
\begin{bmatrix}
\tilde{\Sigma} & N \\
N & \tilde{\Sigma}
\end{bmatrix}.
\end{align}
The right-hand side of~\eqref{eq:uTuapprox}  represents the expected information per observation in our tie-breaker design. Using these formulas, we obtain the following desirable result for prospective D-optimality. 

\begin{theorem}\label{thm:rctisoptimal}
Under the model~\eqref{eq:linmod} with $X_i\sim P_X$, the RCT $p(X_{\bullet}) = 1/2$ is prospectively D-optimal among all designs $p(X_{\bullet})$. Moreover, this is the unique D-optimal design of the form ~\eqref{eq:probs}.
\end{theorem}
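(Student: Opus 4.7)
The plan is to exploit the block structure of $\bbE[U^TU]$ displayed in~\eqref{eq:uTuapprox} and reduce the problem to a monotonicity-of-determinant argument on the Loewner cone.

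First, I would compute the determinant of the $2(d+1)\times 2(d+1)$ block matrix $\bbE[U^TU/n]=\begin{bmatrix}\tilde\Sigma & N\\ N & \tilde\Sigma\end{bmatrix}$. Since $\tilde\Sigma$ is positive definite (because $\Sigma$ is, by assumption) and $N$ is symmetric, the Schur-complement formula gives
\begin{align*}
\det\begin{bmatrix}\tilde\Sigma & N\\ N & \tilde\Sigma\end{bmatrix}
= \det(\tilde\Sigma)\det(\tilde\Sigma - N\tilde\Sigma^{-1}N).
\end{align*}
The factor $\det(\tilde\Sigma)$ does not depend on $p(\cdot)$, so maximizing $\det(\bbE[U^TU])$ is equivalent to maximizing $\det(\tilde\Sigma - N\tilde\Sigma^{-1}N)$.

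Second, I would observe that $N\tilde\Sigma^{-1}N = (\tilde\Sigma^{-1/2}N)^T(\tilde\Sigma^{-1/2}N)$ is positive semi-definite, so $\tilde\Sigma - N\tilde\Sigma^{-1}N \preceq \tilde\Sigma$. Monotonicity of the determinant on the cone of positive definite matrices then yields
$\det(\tilde\Sigma - N\tilde\Sigma^{-1}N)\leq \det(\tilde\Sigma)$,
with equality iff $N\tilde\Sigma^{-1}N = 0$, i.e.\ iff $N = 0$. Under the RCT $p\equiv 1/2$, every entry of $N$ defined by~\eqref{eq:generalNdef} vanishes, so the RCT attains the upper bound, proving the first claim.

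Third, for the uniqueness claim among designs of the form~\eqref{eq:probs}, I would show that $N\neq 0$ whenever the design is not an RCT, i.e.\ whenever $\bbP(|X_\bullet^T\eta|\ge \Delta)>0$ (which in particular requires $\eta\neq 0$). The natural witness is the scalar $e_0^\tran N\tilde\eta$, where $e_0 = (1,0,\dots,0)^\tran$ picks out the intercept row and $\tilde\eta = (0,\eta^\tran)^\tran$. Using~\eqref{eq:Ndef},
\begin{align*}
e_0^\tran N\tilde\eta
= \bbE\bigl[X_\bullet^\tran\eta\,\bigl(\indic\{X_\bullet^\tran\eta\geq \Delta\}-\indic\{X_\bullet^\tran\eta\leq -\Delta\}\bigr)\bigr]
= \bbE\bigl[|X_\bullet^\tran\eta|\,\indic\{|X_\bullet^\tran\eta|\geq \Delta\}\bigr],
\end{align*}
where the two indicator regions give integrands of matching sign. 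This expectation is strictly positive on any non-RCT tie-breaker, so $N\neq 0$; consequently $N\tilde\Sigma^{-1}N$ is a nonzero PSD matrix and $\det(\tilde\Sigma - N\tilde\Sigma^{-1}N) < \det(\tilde\Sigma)$.

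The main obstacle is the uniqueness portion, since merely knowing $N\neq 0$ in some generic sense is not automatic from non-RCT status: a randomly chosen test pair of vectors could suffer cancellation across the two tails $\{X_\bullet^\tran\eta\geq\Delta\}$ and $\{X_\bullet^\tran\eta\leq -\Delta\}$. The pairing of the intercept coordinate with the direction $\eta$ is what removes the sign ambiguity and turns the witness into an unambiguous absolute value integrand.
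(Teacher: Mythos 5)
Your proposal is correct and follows essentially the same route as the paper: the same Schur-complement factorization $\det(\tilde\Sigma)\det(\tilde\Sigma - N\tilde\Sigma^{-1}N)$ with positive semi-definiteness of $N\tilde\Sigma^{-1}N$ forcing $N=0$ at the optimum, and the same uniqueness argument via the linear combination of the intercept-row entries of $N$ weighted by $\eta$, which produces the strictly positive integrand $|X_\bullet^\tran\eta|\,\indic\{|X_\bullet^\tran\eta|\ge\Delta\}$. The only cosmetic difference is that you invoke Loewner monotonicity of the determinant where the paper writes $\det(I-A)\le 1$ for the PSD matrix $A=\tilde\Sigma^{-1/2}N\tilde\Sigma^{-1}N\tilde\Sigma^{-1/2}$; these are the same fact.
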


The proofs of Theorem~\ref{thm:rctisoptimal} and of all subsequent theorems are presented in Appendix \ref{Proofs}. Theorem~\ref{thm:rctisoptimal} does not require $X_i$ to be independent though that would be the usual model. 

The theorem establishes that the RCT is  prospectively D-optimal among \textit{any} randomization scheme $\bbP(Z=1\giv X_\bullet) =p(X_\bullet)\in[0,1]$. It is not necessarily the unique optimum in this larger class.  For instance, if
$$\bbE[ X_{\bullet j}X_{\bullet k}(2p(X_\bullet)-1)]=0$$
for all $j$ and $k$, then the function $p(\cdot)$ would provide the same efficiency as an RCT since it would make the matrix $N$ in the above proof vanish.

Though we frame Theorem~\ref{thm:rctisoptimal} via prospective D-optimality, the result also holds for a broader class of criteria that we call \textit{prospectively monotone}. The basic idea of such criteria is that they only depend on the bottom-right submatrix of $\bbE[(U^{\top}U)^{-1}]$ and that they encourage this submatrix to be small in the standard ordering on positive semi-definite matrices (i.e., $\Sigma_1 \preceq \Sigma_2$ if and only if $\Sigma_2 - \Sigma_1$ is PSD). The precise definition is as follows. 

\begin{definition}\label{prospectivelymonotone}[Prospective monotonicity]
For random predictors, we say that a criterion is \textit{prospectively monotone} for $\hat{\delta}$ if it depends only on the matrix $\mathbb{E}[\Var(\hat{\tilde{\gamma}} \giv X, Z)^{-1}]$ and if the criterion increases whenever this matrix increases in the standard ordering on positive semi-definite matrices. 
\end{definition}
Because these are the only properties of prospective D-optimality we use in Theorem~\ref{thm:rctisoptimal}, the same result holds immediately for any prospectively monotone criterion. Examples include prospective A-optimality, which minimizes the quantity $\text{Tr}(\mathbb{E}[\Var(\hat{\tilde{\gamma}} \giv Z)^{-1}])$, or prospective C-optimality, which minimizes $c^{\top}\mathbb{E}[\Var(\hat{\tilde{\gamma}} \giv Z)^{-1}]c$ for some preset vector $c$ (the latter of use if a particular linear combination of elements of $\hat\gamma$ is of primary interest).  

\subsection{Symmetric Distributions with Symmetric Designs} \label{sssec:symmetric}
We can gain particular insights, both theoretical and practical, by considering a special case satisfying two conditions. First, $P_X$ has a symmetric density, i.e., $f_X(\vec{x}) = f_X(-\vec{x})$ for $\vec{x} \in \bbR^d$. This includes the special case of Gaussian covariates, which we consider in more detail as well. Secondly, we will further assume that $p = 1/2$ and the randomization window is symmetric about zero with width $\Delta\ge0$, which we call a symmetric design. That is, we restrict \eqref{eq:probs} to simply
\begin{align} \label{eq:symmetricp} 
p(Z_i = 1 \st X_i) &= \begin{cases} 1, & X_i^{\top}\eta \ge \Delta \\ 1/2, &|X_i^{\top}\eta| \in (-\Delta, \Delta) \\ 0, & X_i^{\top}\eta \leq -\Delta. \end{cases} \end{align} 
For $j=k=0$ (i.e., both terms are intercepts), equation \eqref{eq:generalNdef} reduces to
$$N_{00} = \bbE[(\indic\{X_{\bullet}^{\top} \eta \geq \Delta\} - \indic\{X_{\bullet}^{\top} \eta \leq -\Delta\})] = 0$$
since we are integrating an odd function with respect to a symmetric density. Likewise, when both $j,k\ge1$  we have $N_{jk} = 0$. The only cases that remain are the first row and first column of $N$, besides the top-left entry. Thus, we can write
\begin{align} \label{eq:Nsimple}
N = \begin{bmatrix}
0 & \alpha^{\top} \\
\alpha & \textbf{0}_{d \times d}
\end{bmatrix}
\end{align}
where $\alpha \in \mathbb{R}^d$ with
\begin{align} \label{eq:alphadef}
\alpha_j &= \bbE\bigl[X_{\bullet j} \bigl(\indic\{X_{\bullet}^{\top} \eta \geq \Delta\} - \indic\{X_{\bullet}^{\top} \eta \leq -\Delta\}\bigr)\bigr]\notag\\
&= 2\bbE[X_{\bullet j} \indic\{X_{\bullet}^{\top} \eta \geq \Delta\}].
\end{align}
We note that $\alpha = \alpha(\Delta, \eta)$ depends on the width $\Delta$ and the treatment assignment vector $\eta$, but we suppress that dependence for notational ease. From \eqref{eq:alphadef}, we can compute explicitly that
$$\label{NSiginvN} N \tilde{\Sigma}^{-1} N = \begin{bmatrix} \alpha^{\top} \Sigma^{-1} \alpha & 0 \\ 0 & \alpha \alpha^{\top} \end{bmatrix},$$
so our criterion becomes
\begin{align*}
\det(\tilde{\Sigma}) \det(\tilde{\Sigma} - N \tilde{\Sigma}^{-1}N)
&= \det(\Sigma) (1 - \alpha^{\top} \Sigma^{-1} \alpha) \det(\Sigma - \alpha \alpha^{\top}) \\
&= (1 - \alpha^{\top} \Sigma^{-1} \alpha)^2 \det(\Sigma)^2. \end{align*}
In the last line we use the formula $\det(A + cd^{\top}) = \det(A) (1 + d^{\top} A^{-1} c)$ for the determinant of a rank-one update of an invertible matrix and we also note that $\det(\tilde \Sigma)=\det(\Sigma)$.
Let $W = \Sigma^{1/2}$ so that $\var(W^{-1}x) = I$. The efficiency therefore only depends on $\alpha$ through $\alpha^{\top}\Sigma^{-1}\alpha = \Vert W^{-1}\alpha\Vert^2$.

We could also ask whether we can do better by changing our randomization scheme to allow
\begin{align} \label{eq:generalp}
\bbP(Z_i = 1 \giv X_i) =
\begin{cases}
1, & X_i^{\top}\eta \geq \Delta \\
p, & |X_i^{\top}\eta| < \Delta \\
0, & X_i^{\top}\eta \leq -\Delta
\end{cases}
\end{align}
for some other $p \neq 1/2$. While this may be a reasonable choice in practice when treatment cannot be assigned equally, it cannot provide any efficiency benefit in the symmetric case, as shown in Theorem~\ref{thm:keeppathalf} below. Just as an RCT is most efficient globally, if one is using the three level rule~\eqref{eq:generalp} then the best choice for the middle level is $1/2$ and that choice is unique under a reasonable assumption.
\begin{theorem}\label{thm:keeppathalf}
If $P_X$ is symmetric and the randomization window is symmetric around zero with width $\Delta$, then a prospectively D-optimal design of the form \eqref{eq:generalp} has $p = 1/2$. Moreover, this design is unique provided that $\mathbb{P}(|X_{\bullet}^{\top} \eta| \leq \Delta) > 0$.
\end{theorem}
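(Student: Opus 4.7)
The plan is to parameterize designs of the form \eqref{eq:generalp} by $c := 2p - 1$, decompose the resulting $N$ around its $p = 1/2$ value, reduce the prospective D-optimality objective to a product of two determinants, and then combine concavity of $\log\det$ with a symmetry argument.

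First I would write $2p(X_\bullet) - 1 = \indic\{X_\bullet^T\eta \geq \Delta\} - \indic\{X_\bullet^T\eta \leq -\Delta\} + c\,\indic\{|X_\bullet^T\eta| < \Delta\}$ and substitute into \eqref{eq:generalNdef}, giving $N = N_0 + c\tilde{M}$, where $N_0$ is the matrix from \eqref{eq:Nsimple} and $\tilde{M}_{jk} := \bbE[\tilde{X}_{\bullet j}\tilde{X}_{\bullet k}\indic\{|X_\bullet^T\eta| < \Delta\}]$. By symmetry of $P_X$, the entries $\tilde{M}_{0k}$ for $k \geq 1$ are expectations of odd functions under a symmetric density and hence vanish, so $\tilde{M} = \diag(m_{00}, M_*)$ is block-diagonal with $m_{00} = \bbP(|X_\bullet^T\eta| < \Delta)$ and $M_* \succeq 0$.

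Next I would apply the identity $\det\begin{bmatrix}\tilde{\Sigma} & N\\ N & \tilde{\Sigma}\end{bmatrix} = \det(\tilde{\Sigma} + N)\det(\tilde{\Sigma} - N)$ (valid for symmetric $\tilde{\Sigma}, N$, by elementary row/column operations) to rewrite the prospective D-optimality objective as $g(c) := \det(S_+ + c\tilde{M})\det(S_- - c\tilde{M})$, where $S_\pm := \tilde{\Sigma} \pm N_0$. Since $\log\det$ is concave on the positive-definite cone, $\log g$ is concave in $c$ on the open interval where both factors are positive definite; under the mild non-degeneracy $\alpha^T \Sigma^{-1}\alpha < 1$ (implicit in the preceding analysis), this interval contains $c = 0$.

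The decisive step will be to show that $\tfrac{d}{dc}\log g(c)\big|_{c=0} = \tr(S_+^{-1}\tilde{M}) - \tr(S_-^{-1}\tilde{M}) = 0$. Setting $J := \diag(1, -I_d)$, conjugation by $J$ flips the sign of the $\alpha$-blocks of $N_0$, so $J S_+ J = S_-$ and hence $S_-^{-1} = J S_+^{-1} J$. Block-diagonality of $\tilde{M}$ gives $J\tilde{M}J = \tilde{M}$, and the cyclic property of the trace yields $\tr(S_-^{-1}\tilde{M}) = \tr(S_+^{-1} J\tilde{M}J) = \tr(S_+^{-1}\tilde{M})$. Thus $c = 0$ is a critical point of the concave function $\log g$ and therefore a global maximizer, proving $p = 1/2$ is prospectively D-optimal. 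For uniqueness, $\log\det(Y + cH)$ has second derivative $-\tr((Y^{-1}H)^2) < 0$ unless $H = 0$; the hypothesis $\bbP(|X_\bullet^T \eta| \leq \Delta) > 0$ forces $\tilde{M} \ne 0$ (via $m_{00} > 0$), yielding strict concavity of $\log g$ and the asserted uniqueness. The main obstacle is the symmetry trace identity $\tr(S_+^{-1}\tilde{M}) = \tr(S_-^{-1}\tilde{M})$; the remainder is routine linear algebra.
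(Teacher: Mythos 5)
Your proposal is correct, and it reaches the conclusion by a genuinely different route from the paper. Both arguments start from the same decomposition $N=N_0+c N_1$ (your $\tilde M$ is the paper's $N_1$) and both exploit that $N_1$ is block diagonal while $N_0$ is block off-diagonal. The paper then works with the Schur complement, writing the objective as $\log\det(q^2A+qB+C)$ and computing first and second derivatives explicitly: $f'(0)=\tr(C^{-1}B)$ vanishes because $C^{-1}B$ is block off-diagonal, and $f''\le 0$ follows from two trace inequalities for positive semi-definite products. You instead factor the full block determinant as $\det(S_++c\tilde M)\det(S_--c\tilde M)$ with $S_\pm=\tilde\Sigma\pm N_0$, get concavity for free from the standard fact that $c\mapsto\log\det(Y+cH)$ is concave on the segment where the argument is positive definite, and kill the first derivative with the conjugation $J S_\pm J=S_\mp$, $J\tilde M J=\tilde M$ for $J=\diag(1,-I_d)$. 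Your route is arguably cleaner: it replaces the paper's second-derivative computation with a citation to concavity of $\log\det$, and the symmetry argument makes transparent \emph{why} $c=0$ is critical. What the paper's computation buys is that it never needs $S_\pm\succ0$ as a separate hypothesis, and its strictness argument ($A,B\ne0$ forces $f''<0$) is slightly more self-contained. Two small points you should tidy up: (i) when $S_++c\tilde M$ or $S_--c\tilde M$ fails to be positive definite, note that $\bbE(U^TU)$ is still positive semi-definite, so $g(c)=0\le g(0)$ there and the restriction to the positive-definite interval costs nothing; (ii) your uniqueness step needs $\bbP(|X_\bullet^T\eta|<\Delta)>0$ (strict inequality) to force $\tilde M\ne0$, whereas the theorem is stated with $\le$ --- but the paper's own proof has exactly the same slippage, so this is not a gap relative to it.
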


An informative example is the case in which $P_X = \dnorm(0, \Sigma)$ for some covariance matrix $\Sigma$. In this case, we can compute the efficiency explicitly as a function of $\Delta$. 

\begin{theorem}\label{thm:gausmeaninfo}
Let $P_X$ be the $\dnorm(0,\Sigma)$ distribution for a positive definite matrix $\Sigma$.  For $X_i\simiid P_X$ and $Z_i$ sampled independently from~\eqref{eq:probs} for a nonzero vector $\eta\in\bbR^d$ and a threshold $\Delta\ge0$
$$
\det\bigl(\bbE\bigl( \var(\hat\delta \giv Z)\bigr)^{-1}\bigr)
=\Bigl(1 - \frac{2}{\pi} e^{\frac{-\Delta^2}{\eta^{\top}\Sigma\eta}}\Bigr)^2 \det(\Sigma)^2.
$$
\end{theorem}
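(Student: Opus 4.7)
The plan is to combine two results already established in the excerpt. First, since the $\dnorm(0,\Sigma)$ density is symmetric about the origin, the full apparatus of the Symmetric Distributions subsection applies. In particular, $N$ takes the block off-diagonal form in \eqref{eq:Nsimple}, and the rank-one determinant identity $\det(A+cd^T)=\det(A)(1+d^TA^{-1}c)$ together with \eqref{eq:uTuapprox} yields the prospective D-optimality criterion per observation as
$$\det\bigl(\bbE[U^TU/n]\bigr) = (1-\alpha^T\Sigma^{-1}\alpha)^2\det(\Sigma)^2,$$
which is the quantity on the left-hand side of the theorem. All that remains is to evaluate $\alpha^T\Sigma^{-1}\alpha$ in the Gaussian case.

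Second, Proposition \ref{prop:alphageneralgaussian} supplies the closed form
$$\alpha = \sqrt{\frac{2}{\pi}}\,\frac{\Sigma\eta}{\sqrt{\eta^T\Sigma\eta}}\,e^{-\Delta^2/(2\eta^T\Sigma\eta)}.$$
Because $(\Sigma\eta)^T\Sigma^{-1}(\Sigma\eta)=\eta^T\Sigma\eta$, the two factors of $\eta^T\Sigma\eta$ telescope and the exponents add, giving
$$\alpha^T\Sigma^{-1}\alpha = \frac{2}{\pi}\,e^{-\Delta^2/(\eta^T\Sigma\eta)}.$$
Substituting this into the display above yields the stated identity.

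The proof is essentially a mechanical substitution; the genuine work has already been done in the two Propositions that reduce the problem to the spherical Gaussian case and evaluate the one-dimensional Gaussian integral defining $\alpha_j$. The only observation worth flagging — and the reason the final expression looks so clean — is that the Rayleigh-type quotient $(\Sigma\eta)^T\Sigma^{-1}(\Sigma\eta)/(\eta^T\Sigma\eta)$ collapses to $1$, so the answer depends on $\eta$ and $\Sigma$ only through the scalar $\eta^T\Sigma\eta$ appearing in the exponent and through the overall factor $\det(\Sigma)^2$. I do not anticipate any real obstacle at this stage.
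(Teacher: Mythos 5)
Your proposal is correct and follows the same route as the paper's own (very terse) proof: apply the symmetric-distribution identity $\det(\bbE[U^TU/n]) = (1-\alpha^T\Sigma^{-1}\alpha)^2\det(\Sigma)^2$ and substitute the Gaussian formula for $\alpha$ from Proposition~\ref{prop:alphageneralgaussian}. Your explicit note that $(\Sigma\eta)^T\Sigma^{-1}(\Sigma\eta)=\eta^T\Sigma\eta$ is the one computation the paper leaves implicit, and it is carried out correctly.
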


From Theorem~\ref{thm:gausmeaninfo}
we find that the efficiency ratio between $\Delta = \infty$ (the RCT) and $\Delta = 0$ (the RDD) is $(1 - {2}/{\pi})^{-2} \approx 7.57$. The result in \cite{goldberger} gives a ratio of $(1-2/\pi)^{-1}$ for the variance of the slope in the case $d = 1$.  Our result is the same, though we pick up an extra factor because our determinant criterion incorporates both the intercept and the slope. Their result was for $d=1$; here we get the same efficiency ratio for all $d\ge1$.

In this multivariate setting
we see that for any fixed $\Delta>0$, the most efficient design minimizes $\eta^{\top} \Sigma \eta$, so it is the  eigenvector corresponding to the smallest eigenvalue of $\Sigma$. This represents the least ``distribution-aware'' choice, i.e., the last principal component vector, which aligns with our intuition that we gain more information by randomizing as much as possible.

\section{Short-term Gain}\label{sec:shorttermgain}

We turn now to the other arm of the tradeoff, the short-term gain. In our motivating problems, the response is defined so that larger values of $Y_i$ define better outcomes. 
Now $\bbE[Y_i]=\bbE[\tilde X_i^\top\tilde\beta]+\tilde T_i$ where $\tilde T_i = \bbE[Z_i \tilde X_i^{\top}\tilde\gamma]$. The first term in $\bbE[Y_i]$ is not affected by the treatment allocation and so any consideration of short-term gain can be expressed in terms of $\tilde T_i$. Further, $\bbE(\tilde T_i) = \bbE[Z_i\gamma_0] +T_i$ for $T_i = \bbE[Z_iX_i^\top\gamma]$. 
Now $\bbE[Z_i\gamma_0]$ only depends on our design via the expected proportion of treated subjects. This will often be fixed by a budget constraint and even when it is not fixed, it  does not depend on where specifically we assign the treatment. Therefore, for design purposes we may focus on $T_i$. 

Under the model~\eqref{eq:linmod} with treatment assignment from~\eqref{eq:generalp},
\begin{align} \label{treatmentbenefit} 
T_i = \bbE\bigl[X_i^{\top}\gamma\bigl(\indic\{X_i^{\top}\eta \geq u\} - \indic\{X_i^{\top} \eta \leq \ell\} + (2p - 1) \indic\{X_i^{\top}\eta \in (\ell, u)\}\bigr)\bigr]. 
\end{align} 
If $\eta = \gamma$, so that we assign treatment using the true treatment effect vector, then equation \eqref{treatmentbenefit} shows that the best expected gain comes from taking $u = \ell = 0$, which is an RDD centered at the mean. Moreover, the expected gain decreases monotonically as we increase $u$ beyond $0$ or decrease $\ell$ below $0$. This matches our intuition that we must sacrifice some short-term gain to improve on statistical efficiency. Ordinarily $\eta \neq \gamma$, and a poor choice of $\eta$ could break this monotonicity.

In the Gaussian case considered in Section \ref{sssec:symmetric}, we can likewise derive an explicit formula for the expected gain as a function of $u$, $\eta$, and $\gamma$. Letting $T_{\bullet} = Z_{\bullet}X_{\bullet}^{\top}\gamma$, we have
\begin{align*} \bbE[T_{\bullet}] &= \bbE[X_{\bullet}^{\top}\gamma (\indic\{X_{\bullet}^{\top}\eta \geq u\} - \indic\{X_{\bullet}^{\top}\eta \leq -u\})] \\
&= \gamma^{\top} \bbE[X_{\bullet} (\indic\{X^{\top}\eta \geq u\} - \indic\{X_{\bullet}^{\top}\eta \leq -u\})] \\
&= \gamma^{\top} \alpha.
\end{align*}
Using the formula \eqref{eq:alphagauss} for $\alpha$ in the Gaussian case, this is simply
$$\bbE[T_{\bullet}] = \sqrt{\frac{2}{\pi}} \frac{\gamma^{\top} \Sigma \eta}{\sqrt{\eta^{\top} \Sigma \eta}} \text{ } e^{\frac{-u^2}{2 \eta^{\top}\Sigma\eta}}.$$

We expect intuitively that $\eta=\gamma$
will maximize $\bbE[T_{\bullet}]$.
To verify this we start by choosing $u=u(\eta)$ in a way that keeps the proportion of data in the three treatment regions constant.
We do so by taking $u =u(\eta) =u_0\sqrt{\eta^{\top}\Sigma\eta}$ for some $u_0\ge0$, and then
$$\bbE[T_{\bullet}] = \sqrt{\frac{2}{\pi}} \frac{\gamma^{\top} \Sigma \eta}{\sqrt{\eta^{\top} \Sigma \eta}} \text{ } e^{{-u_0^2}/{2}}.$$
Let $\gamma_w=\Sigma^{1/2}\gamma$ and $\eta_w=\Sigma^{1/2}\eta$, using the same matrix square root in both cases.  Then
$$
\frac{\gamma^{\top}\Sigma\eta}{\sqrt{\eta^{\top}\Sigma\eta}}
=\frac{\gamma_w^{\top}\eta_w}{\Vert\eta_w\Vert}
$$
is maximized by taking $\eta_w=\gamma_w$ or equivalently $\eta=\gamma$. Any scaling of $\eta=\gamma$ leaves this criterion invariant.

Working under the normalization $\eta^{\top}\Sigma \eta = 1$, we can summarize our results in the Gaussian case as
\begin{align}\label{eq:deltatradeoff}
\det\Bigl( \frac{1}{n} U^{\top}U\Bigr) &= \Bigl(1 - \frac{2}{\pi} e^{-u_0^2}\Bigr)^2 \det(\Sigma)^2,\quad\text{and} \\
\label{eq:gaintradeoff} \bbE[T_{\bullet}] &= \sqrt{\frac{2}{\pi}} \gamma^{\top} \Sigma \eta \text{ }  e^{{-u_0^2}/{2}}.
\end{align}
With our normalization, $u^2 = u_0^2\eta^{\top}\Sigma\eta=u_0^2$.
Equations~\eqref{eq:deltatradeoff} and \eqref{eq:gaintradeoff} quantify the tradeoff between efficiency and short-term gain, that come from choosing $u_0$. Greater randomization through larger $u_0$ increases efficiency, and, assuming that the sign of $\eta$ is properly chosen, decreases the short-term gain.

\section{Convex Formulation}\label{sec:convex}

In this section, we return to the setting where $x_1,\ldots, x_n \in\bbR^d$ are fixed values but $Z_i$ are not yet assigned. We use lowercase $x_i$ to emphasize that they are non-random. We assume that any subset of the $x_i$ of size $d$ has full rank, as would be the case almost surely when they are drawn IID from a distribution whose covariance matrix is of full rank. The design problem is to choose $p_i=\bbP(Z_i=1)$.  

For given $x_i$, the design matrix in \eqref{eq:linmod} is
$$U =
\begin{bmatrix}
u_1(Z_1)^{\top} \\
u_2(Z_2)^{\top} \\
\vdots \\
u_n(Z_n)^{\top}
\end{bmatrix},
\quad\text{for}\quad
u_i(1) =u_{i+}\equiv\begin{bmatrix}
\tilde{x}_i\\
\tilde{x}_i
\end{bmatrix}
\quad\text{and}\quad
u_i(-1) =u_{i-}\equiv\begin{bmatrix}
\phantom{-}\tilde{x}_i\\
-\tilde{x}_i
\end{bmatrix}.
$$
Introducing $p_{i+}=p_i$ and $p_{i-}=1-p_i$ we get
\begin{align}
\bbE(U^{\top}U) &= \sum_{i = 1}^{n} \bbE(u_i(Z_i) u_i(Z_i)^{\top})
=\sum_{i = 1}^{n} (p_{i+} u_{i+} u_{i+}^{\top} + p_{i-} u_{i-} u_{i-}^{\top}).
\end{align}
Our design criterion is to choose $p_{i\pm}$ to minimize
\begin{equation*} \label{eq:objective} -\log \det(\bbE(U^{\top}U)) = -\log \det \Biggl(\sum_{i = 1}^{n} \sum_{s \in \{+, -\}} p_{is} u_{is} u_{is}^{\top} \Biggr). \end{equation*}
This problem is only well-defined for $n \ge d$, since otherwise the matrix does not have full rank for any choice of $p \in \bbR^n$ and the determinant is always zero. This criterion is convex in $\{p_{is} \giv 1 \leq i \leq n, s \in \{+, -\}\}$ by a direct match with Chapter 7.5.2 of \cite{boyd2004convex} over the convex domain with $0 \leq p_{i\pm} \leq 1$ and $p_{i+} + p_{i-} = 1$ for all $i$. 

It will be simpler for us to optimize over $q \equiv 2 p - 1 \in [-1, 1]^n$, in which case
\begin{align} \label{eq:convexopt}
-\log \det(\bbE(U^{\top}U)) = -\log \det \left(\begin{bmatrix} \sum_{i = 1}^{n} \tilde{x}_i \tilde{x}_i^{\top} & \sum_{i = 1}^{n} q_i \tilde{x}_i \tilde{x}_i^{\top} \\ \sum_{i = 1}^{n} q_i \tilde{x}_i \tilde{x}_i^{\top} & \sum_{i = 1}^{n} \tilde{x}_i \tilde{x}_i^{\top} \end{bmatrix}\right).
\end{align}

Absent any other constraints, we have seen that the RCT ($q_i = 0$, $p_i = {1}/{2}$ for all $i \leq n$) always minimizes \eqref{eq:convexopt}. The constrained optimization of \eqref{eq:convexopt} can be cast as both a semi-definite program \citep{boyd2004convex} and a mixed integer second-order cone program \citep{sagnolharman}.

This setting is close to the usual design measure relaxation. Instead of choosing $n_i=1$ point $(\tilde{x}_i,Z_i)$ for observation $i$ we make a random choice between $(\tilde{x}_i,1)$ and $(\tilde{x}_i,-1)$ for that point. The difference here is that we have the union of $n$ such tiny design problems. 

\subsection{Some useful convex constraints}

We outline a few reasonable (and convex) constraints that one could impose on this problem:

\medskip\noindent
\textbf{Budget constraint}: In practice we have a fixed budget for the treatments. For instance the number of scholarships or customer perks to give out may be fixed for economic reasons. We can impose this constraint in expectation by setting $(1/n) \sum_{i = 1}^{n} p_{i} = \mu$, where $\mu$ is some fixed average treatment rate.

\medskip\noindent
\textbf{Monotonicity constraint}: It may be reasonable to require that $p_{i}$ is nondecreasing in some running variable $r_i = f(\tilde{x}_i)$. For example, a university may require that an applicant's probability of receiving a scholarship can only stay constant or increase with their score, which is some combination of applicant variables. We can encode this as a convex constraint by first permuting the data matrix so that $r_{(1)} \leq r_{(2)} \leq \cdots \leq r_{(n)}$ and then forcing $p_{(1)} \leq p_{(2)} \leq \cdots \leq p_{(n)}$. Note that the formulation \eqref{eq:probs} satisfies this monotonicity constraint, in which case $r_i = \tilde{x}_i^{\top}\eta$.

\medskip
\noindent
\textbf{Gain constraint:} One may also want to impose that the expected gain is at least some fraction of its highest possible value, i.e.
\begin{align} \label{eq:gaincondition} \sum_{i = 1}^{n} p_i \tilde{x}_i^{\top} \eta \geq \rho \sum_{i = 1}^{n} |\tilde{x}_i^{\top}\eta|. \end{align}
The left-hand side of \eqref{eq:gaincondition} is the expected gain for this choice of $p_i$, whereas the right-hand side is the highest possible gain, which corresponds to the RDD $\bbP(Z_i = 1) = \indic\{\tilde{x}_i^{\top}\eta \geq 0\}$. Because $\gamma$ is typically not known exactly, \eqref{eq:gaincondition} computes the anticipated gain under the sampling direction $\eta$ we use. If an analyst has access to a better estimate of gain, such as a set of estimated treatment effects $\{\hat{\tau}(\tilde{x}_1),\ldots, \hat{\tau}(\tilde{x}_n)\}$ fit from prior data, they could replace \eqref{eq:gaincondition} by $\sum_{i = 1}^{n} p_i \hat{\tau}(\tilde{x}_i) \geq \rho \sum_{i = 1}^{n} |\hat{\tau}(\tilde{x}_i)|$, which is again a linear constraint on $q$. 

 \medskip\noindent
\textbf{Covariate balance constraint:}
The expected total value of the $j$th variable for treated subjects  is $\sum_i p_i \tilde{x}_{ij}$.
The constraint 
$$\sum_ip_i(\tilde{x}_{ij}-\Delta_j) \le 0$$ 
keeps the expected sample average of the $j$th variable
among treated subjects to at most $\Delta_j$ regardless
of the actual number of treated subjects. We note that sufficiently stringent covariate balance constraints may not be compatible with gain or monotonicity constraints. 

\subsection{Piecewise constant optimal designs}
Though we delay discussion of a particular example to Section \ref{sec:mimic}, we prove here an empirically-observed phenomenon regarding the monotonicity constraint. In particular, optimal solutions when a monotonicity constraint is imposed display only a few distinct levels in the assigned treatment probability. This is consistent with the one-dimensional theory of \cite{li:owen:2023}, though interestingly that paper observed the same behavior even with only a budget constraint.  It also used an approach that does not obviously extend to $d>1$.

\begin{theorem}\label{thm:fewlevels}
Consider the optimization problem
\begin{align} \label{eq:mon_constrained} \begin{split}
\underset{q}{\min} &-\log \det \left( \sum_{i = 1}^{n} \begin{bmatrix} \tilde{x}_i \tilde{x}_i^{\top} & q_i \tilde{x}_i \tilde{x}_i^{\top} \\ q_i \tilde{x}_i \tilde{x}_i^{\top} & \tilde{x}_i \tilde{x}_i^{\top} \end{bmatrix} \right) \\ 
\mathrm{s.t.}\ \  &q \in [-1, 1]^n, \\ 
&q_{\pi(1)} \leq q_{\pi(2)} \leq \cdots \leq q_{\pi(n)}, \\ 
&\overline{q} = \mu. \end{split} \end{align} 
Suppose that the non-intercept columns $x_1,\ldots, x_n$ are drawn IID from a density on $\bbR^d$. Then with probability one, the solution to \eqref{eq:mon_constrained} has at most $\binom{d + 2}{2} + 2$ distinct levels.
\end{theorem}

These upper bounds closely resemble those in \cite{cookfedorov}, who study constraints on a design of $\tilde{x}_1,\ldots,\tilde{x}_n$ themselves rather than on a treatment variable. However, theirs is an existence result for \textit{some} solution with this level of sparsity, whereas our result holds for \textit{any} solution with probability one. Moreover, their result derives from Caratheodory's theorem, whereas ours comes from a close analysis of the KKT conditions. 

The upper bounds in Theorem \ref{thm:fewlevels} are not tight. In our experience, there have typically been no more than five or six distinct levels even for $d$ as high as ten.

\section{MIMIC-IV-ED Example}\label{sec:mimic}

In this section we detail a simulation based on a real data set of emergency department (ED) patients. The MIMIC-IV-ED database \citep{mimic} provided via PhysioNet \citep{physionet} includes data on ED admissions at the Beth Israel Deaconess Medical Center between 2011 and 2019.

Emergency departments face heavy resource constraints, particularly in the limited human attention and beds available. It is thus important to ensure patients are triaged appropriately so that the patients in most urgent need of care are assigned to intensive care units (ICUs). In practice, this is often done via a scoring method such as the Emergency Severity Index (ESI), in which patients receive a score in $\{1, 2, 3, 4, 5\}$, with $1$ indicating the highest severity and $5$ indicating the lowest severity. MIMIC-IV-ED contains these values as acuity scores, along with a vector of vital signs and other relevant information about each patient.

Such a setting provides a very natural potential use case for tie-breaker designs. Patients arrive with an assortment of covariates, and hospitals acting under resource constraints must decide whether to put them in an ICU. A hospital or researcher may be interested in the treatment effect of an ICU bed; for example, a practical implication of such a question is whether to expand the ICU or allocate resources elsewhere \citep{phua2020lessmore}. It is also of interest \citep{chang2017icu} to understand which types of patient benefit more or less from ICU beds to improve this costly resource allocation. Obviously, it is both unethical and counterproductive to assign some ICU beds by an RCT.  Patients with high acuity scores must be sent to the ICU, and those with low acuity scores may be exposed unnecessarily to bad outcomes such as increased risk of acquiring a hospital-based infection \citep{kumar2018healthcare, vranas2018lowrisk}. However, it may be possible to randomize ``in the middle'', e.g., by randomizing for patients with an intermediate acuity scores such as $3$, or with scores in a range where there is uncertainty as to whether the ICU would be beneficial for that patient. Because such patients are believed to have similar severities, this would limit ethical concerns and allow for greater information gain.

\begin{figure}[t!]
    \centering
    \includegraphics[width=.9\hsize]{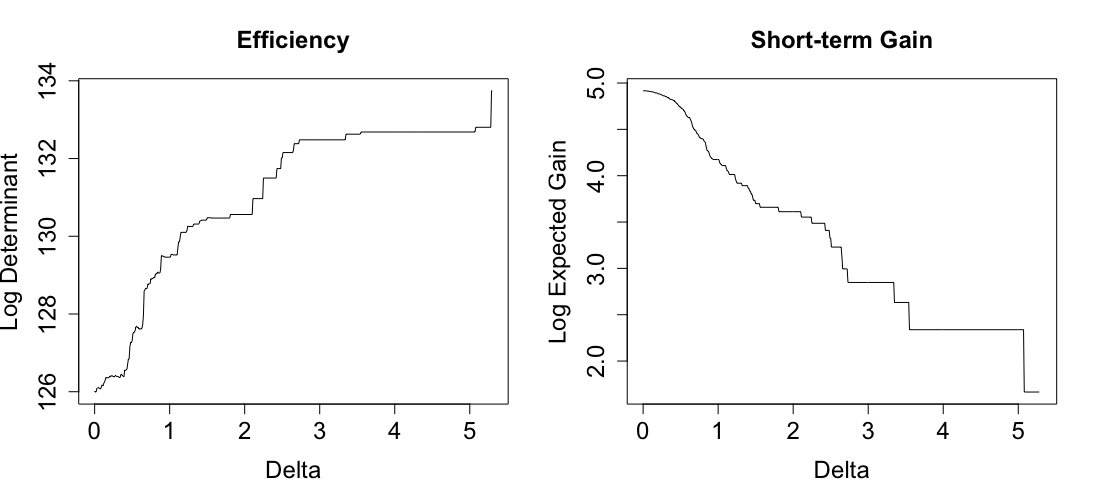}
    \caption{Efficiency and gain of the MIMIC-IV-Ed simulation as a function of the width of the randomization window $\Delta$. }
    \label{MIMICeffgain}
\end{figure}

The triage data set contains several vital signs for patients. Of these, we use all quantitative ones, which are: temperature, heart rate (HR), respiration rate (RR), oxygen saturation ($O_2$ Sat.), and systolic and diastolic blood pressure (SBP and DBP). There is also an acuity score for each patient, as described above. The data set contains 448,972 entries, but to arrive at a more realistic sample for a prospective analysis, we randomly select $200$ subjects among those with no missing or blatantly inaccurate entries. Our optimization was done using the CVXR package \citep{cvxr} and the MOSEK solver \citep{mosek}.

To carry out a full analysis of the sort described in this paper, we need a vector $\eta$, as in \eqref{eq:probs}. In practice, one could assume a model of the form \eqref{eq:linmod} and take $\eta = \hat{\gamma}$ for some estimate of $\gamma$ formed via prior data. Since we do not have any $Y$ values (which in practice could be some measure of survival, length of stay, or subsequent readmission), we will construct $\eta$ via the acuity scores, using the reasonable assumption that treatment benefit increases with more severe acuity scores.

We collapse acuity scores of $\{1, 2\}$ into a group ($Y = 1$) and acuity scores of $\{3, 4, 5\}$ into another ($Y = 0$) and perform a logistic regression using these binary groups. The covariates used are the vital signs and their squares, the latter to allow for non-monotonic effects, e.g., the acuity score might be lower for both abnormally low and abnormally high heart rates. All covariates were scaled to mean zero and variance one. For pure quadratic terms the squares of the scaled covariates were themselves scaled to have mean zero and variance one. We also considered an ordered categorical regression model but preferred the logistic regression for ease of interpretability. Our estimated $\hat{\eta_j}$ are in Table~\ref{tab:etas}.

\begin{table}
\centering
\begin{tabular}{c c c c c c c}
 \toprule
 Int. & Temp. & Temp$^2$ & HR & HR$^2$ & RR & RR$^2$ \\
  $-0.74$ & $-0.32$ & $0.22$ & $-0.03$ & $0.67$ & $-0.03$ & $\phz0.54$\\
 \midrule
 & $O_2$ Sat. & $O_2$ Sat.$^2$ & SBP & SBP$^2$ & DBP & DBP$^2$ \\
  & $\phz0.03$ & $0.36$ & $\phz0.01$ & $0.17$ & $-0.11$ & $-0.13$\\
 \bottomrule
\end{tabular}
\caption{\label{tab:etas}
These are the coefficients $\hat\eta_j$ that define a quadratic running variable for the MIMIC
data. The intercept is followed by a sum of pure quadratics in temperature, heart rate, respiration rate, $O_2$ saturation, systolic blood pressure and diastolic blood pressure.
}
\end{table}

Figure \ref{MIMICeffgain} presents the efficiency/gain tradeoff as we vary the size of the randomization window $\Delta$ in \eqref{eq:symmetricp}. For ease of visualization, we plot the logs of both quantities. As expected, we get a clear monotone increase in efficiency and decrease in gain as we increase $\Delta$, moving from an RDD to an RCT. It should be noted that our efficiency criterion, because it only uses information in the $X$ values, is robust to a poor choice of $\eta$, whereas our gain definition is constrained by the assumption that $\eta$ is a reasonably accurate stand-in for the true treatment effect $\gamma$.

In practice, it is hard to interpret what a ``good'' value of efficiency is because of our D-optimality criterion. Hence, as in \cite{owenvarian}, a pragmatic approach is to first stipulate that the gain is at least some fraction of its highest possible value, and then pick the largest $\Delta$ for this choice to maximize efficiency. A more qualitative choice based on results like Figure \ref{MIMICeffgain}, such as picking the right endpoint of a sharp efficiency jump or the left endpoint of a sharp gain decline, would also be sensible.

\begin{figure}[t!]
    \centering
    \includegraphics[width=10cm]{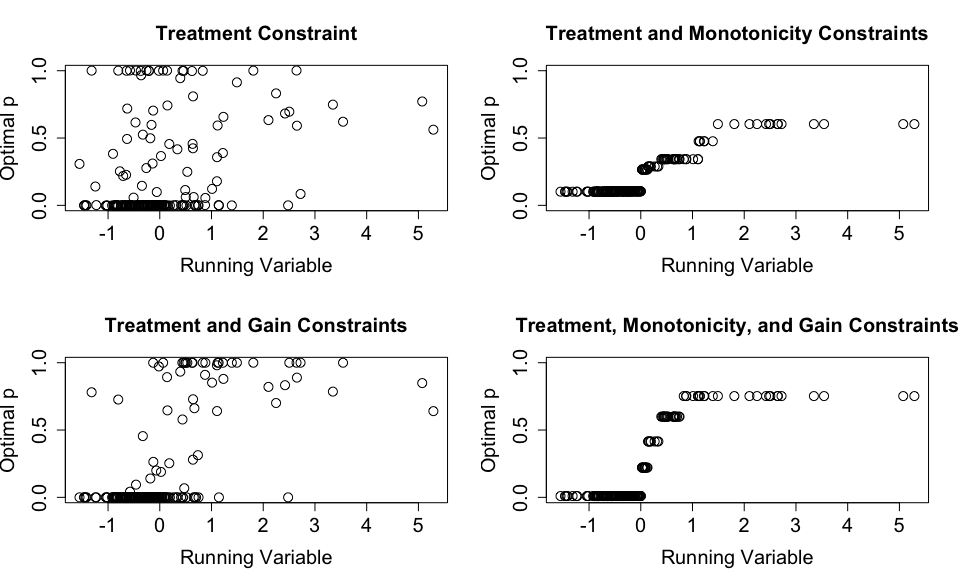}
    \caption{Optimal solutions for MIMIC-IV-ED treatment probabilities under various constraints. The treatment constraint imposed $\overline{p} = 0.2$ for the average treatment rate, and the gain constraint imposed $\rho = 0.7$, i.e., at least $70\%$ of the maximum possible gain.}
    \label{fig:MIMICplot}
\end{figure}

As we see in Figure~\ref{fig:MIMICplot}, the treatment constraint causes most $p_i$ to be at or near zero or one. Adding the gain constraint pushes most of the treatment probabilities to zero for low values of the running variable and one for high values. This scenario most closely resembles the RDD, with some deviations to boost efficiency. Indeed, the optimal solution would necessarily tend towards the RDD solution as the gain constraint increased. Finally, the monotonicity constraint further pushes the higher values of $p$ to the positive values of the running variable and vice-versa, since we lose the opportunity to counterbalance some high and low probabilities at the extreme with their opposites. The right two panels display a few discrete levels in the treatment probability, consistent with Theorem \ref{thm:fewlevels}.

\section{Discussion}\label{sec:discussion}

In this paper, we add to a growing body of work demonstrating the benefits of tie-breaker designs. Though RCTs are often infeasible, opportunities for small doses of randomization may present themselves in a wide variety of real-world settings, in which case treatment effects can be learned more efficiently. This phenomenon is analogous to similar causal inference findings about merging observational and experimental data \citep{rct+odb,rosenman2023combining,colnet2024causal}.

The convex optimization framework in Section~\ref{sec:convex} is more general and conveniently only relies on knowing sample data rather than population parameters. It is also simple to implement and allows one to incorporate natural economic and ethical constraints with ease. 

Multivariate tie-breaker designs are a natural option in situations in which there is no clear univariate running variable. For example, subjects may possess a vector of covariates, many of which could be associated with heterogeneous treatment effects in some unknown way of interest. Of course, two-line models and their multivariate analogs are not nearly as complicated as many of the models found in practice. Our view is to use them as a working model by which to decide on treatment allocations, in which case a more flexible model could be used upon full data acquisition as appropriate.

\section{Acknowledgements}
We thank John Cherian, Anav Sood, Harrison Li and Dan Kluger for helpful discussions. We also thank Balasubramanian Narasimhan for helpful input on the convex optimization problem, Michael Baiocchi and Minh Nguyen of the Stanford School of Medicine for discussions about triage to hospital intensive care units, and an anonymous reviewer for helpful comments. This work was supported by the NSF under grants IIS-1837931 and  DMS-2152780.  T.\ M.\ is supported by a B.\ C.\ and E.\ J.\ Eaves Stanford Graduate Fellowship.  

\bibliographystyle{plainnat}
\bibliography{tiebreaker}

\appendix
\section{Proofs}\label{Proofs} 

\begin{proof}[\bf Proof of \Cref{thm:rctisoptimal}]
Using $\bbE(\var(\hat\delta \giv X, Z)^{-1})$ from~\eqref{eq:uTuapprox}, we have
\begin{align*}
\det \begin{bmatrix}
\tilde{\Sigma} & N \\
N & \tilde{\Sigma}
\end{bmatrix}
&= \det(\tilde{\Sigma}) \det(\tilde{\Sigma} -  N\tilde{\Sigma}^{-1} N) \\
 &= \det(\tilde{\Sigma})^2 \det(I - \tilde{\Sigma}^{-1/2} N \tilde{\Sigma}^{-1} N \tilde{\Sigma}^{-1/2}) \\
&= \det(\tilde{\Sigma}^2) \det(I - A) \end{align*}
where $A = \tilde{\Sigma}^{-1/2} N \tilde{\Sigma}^{-1} N \tilde{\Sigma}^{-1/2}$ is symmetric and positive semi-definite. Now $\det(I - A)\le1$ with equality if and only if $A = 0$, which occurs if and only if $N = 0$. Therefore, any design giving $N = 0$ is prospectively D-optimal. From the formula \eqref{eq:generalNdef} for $N_{jk}$, we see right away that the RCT $p(X_{\bullet}) = 1/2$ satisfies this, which proves the first half of the theorem. 

If we restrict to designs of the form \eqref{eq:probs}, then for general $(\ell, u, p, \eta)$ we have
$$N_{jk} = \bbE[\tilde{X}_{\bullet j} \tilde{X}_{\bullet k}(\indic\{X_{\bullet}^{\top} \eta \geq u\} - \indic\{X_{\bullet}^{\top} \eta \leq \ell\} + q \indic\{X^{\top}\eta \in (\ell, u)\})]$$
where $q \equiv 2p - 1$. Consider a tuple $(\ell, u, p, \eta)$ for which $N = 0$. Then considering in particular the entry of $N$ where $j = k = 1$ (so that $\tilde{X}_{\bullet j} = \tilde{X}_{\bullet k} = 1$), we must have
\begin{align} \label{eq:ppart} p_u - p_{\ell} + qp_m = 0 \end{align} 
where $p_u = \bbP(X^{\top}\eta \geq u)$, $p_{\ell} = \bbP(X^{\top}\eta \leq \ell)$, and $p_m = \bbP(X^{\top}\eta \in (\ell, m))$. If we freeze $k = 0$ and vary $j$ from $1$ to $d$, we obtain
$$\bbE[X_{\bullet j} (\indic\{X_{\bullet}^{\top} \eta \geq u\} - \indic\{X_{\bullet}^{\top} \eta \leq \ell\} + q \indic\{X^{\top}\eta \in (\ell, u)\})] = 0$$
for all $j$. Taking a suitable linear combination gives
$$\bbE[X_{\bullet}^{\top}\eta (\indic\{X_{\bullet}^{\top} \eta \geq u\} - \indic\{X_{\bullet}^{\top} \eta \leq \ell\} + q \indic\{X^{\top}\eta \in (\ell, u)\})] = 0$$
Let $f(u) = \bbE[X^{\top}\eta \giv X^{\top}\eta \geq u]$, and analogously for $f(\ell)$ and $f(m)$. Then 
\begin{align} \label{eq:linearpart} f(u) p_u - f(\ell) p_{\ell} + f(m) q p_m = 0. \end{align} 
Next, equations \eqref{eq:ppart} and \eqref{eq:linearpart} give respectively that
\begin{align*} f(\ell) p_{\ell} &= f(\ell) p_u + f(\ell) q p_m \\ 
&= f(u)p_u + f(m) q p_m \end{align*} 
i.e., that 
$$(f(u) - f(\ell)) p_u = (f(\ell) - f(m)) qp_m.$$
By definition, $f(u) > f(m) > f(\ell)$, so the left-hand side is non-negative and $(f(\ell) - f(m))$ is negative. This implies that $qp_m \leq 0$. Similarly, we have
$$f(u) p_{u} = f(u)p_{\ell} - f(u) q p_m = f(\ell) p_{\ell} - f(m) q p_m$$
and so 
$$(f(u) - f(\ell)) p_{\ell} = (f(u) - f(m)) qp_m.$$
Therefore $qp_m \geq 0$,  and so $qp_m = 0$. This leaves three possibilities: $q = 0$, $p_m = 0$, or both. If $p_m = 0$, then \eqref{eq:ppart} implies that $p_u = p_{\ell} = 1/2$, which is impossible given \eqref{eq:linearpart}. So then it must be that $q = 0$, whereupon \eqref{eq:ppart} again implies that $p_u = p_{\ell}$ and \eqref{eq:linearpart} forces $p_u = p_{\ell} = 0$. In summary, a prospectively D-optimal design must satisfy
\begin{align*} &\bbP(X^{\top}\eta < \ell) = \bbP(X^{\top}\eta > u) = 0,\quad\text{and} \\
&\bbP(Z = 1 \giv X^{\top}\eta \in (\ell, u)) = \frac{1}{2}.\qedhere \end{align*} 
\end{proof}

\smallskip
\begin{proof}[\bf Proof of \Cref{thm:keeppathalf}]
Let $q = 2p - 1$. The off-diagonal block matrix $N = N(q)$ in \eqref{eq:uTuapprox} can now be written as
$$N_{jk} = \bbE[\tilde{X}_{\bullet j} \tilde{X}_{\bullet k} (\indic\{X_{\bullet}^{\top}\eta \geq u\} - \indic\{X_{\bullet}^{\top}\eta \leq u\} + q \indic\{|X_{\bullet}^{\top}\eta| < u\})].$$
That is, we can write $N = N_0 + qN_1$, where $N_0$ is as in \eqref{eq:Nsimple} and $N_1$ has $(j,k)$ entry equal to $\bbE[\tilde{X}_{\bullet j} \tilde{X}_{\bullet k} \indic\{|X_{\bullet}^{\top}\eta| < u\}]$. Note that $N_1$ is block diagonal, the exact opposite of $N_0$. Let
\begin{align} \label{objectiveq} f(q) = \log \det\bigl(\tilde{\Sigma} - (N_0 + qN_1) \tilde{\Sigma}^{-1} (N_0 + qN_1)\bigr).\end{align}
To prove the theorem, we will simply show that $f'(0) = 0$ and $f''(q) \leq 0$ for $q \in [-1, 1]$, implying that $q = 0$ (i.e., $p = 1/2$) is the global maximizer of $f$ on this interval. Let
\begin{equation} \label{ABCdef}
\begin{split}
A &= -N_1 \tilde{\Sigma}^{-1} N_1,\\ 
B &= -(N_1 \tilde{\Sigma}^{-1}N_0 + N_0 \tilde{\Sigma}^{-1}N_1), \quad\text{and}\\ 
C &= \tilde{\Sigma}^{-1} - N_0 \tilde{\Sigma}^{-1} N_0 \end{split}\end{equation}
so that $f(q) = \log \det(q^2A + qB + C)$. Call a $(d + 1) \times (d + 1)$ block matrix ``block off-diagonal'' if it is zero in the top-left entry and zero in the bottom-right $d \times d$ block, as in the case of $N_0$. The product of two block off-diagonal matrices is block-diagonal (with blocks of size $1\times 1$ and $d\times d$), and the product of a block off-diagonal matrix and such a block diagonal matrix is block off-diagonal. Thus, $A$ and $C$ are both block diagonal, whereas $B$ is block off-diagonal.
Differentiating $f$, we obtain
$$f'(q) = \tr((q^2A + qB + C)^{-1}(2qA + B))$$
so that $f'(0) = \text{tr}(C^{-1}B)$. As noted, $C$ is block diagonal and $B$ is block off-diagonal, so the product $C^{-1}B$ is block off-diagonal and thus $f'(0) = 0$.
It simplifies some expressions to let $M_1 = 2qA + B$ and $M_2 = (q^2A + qB + C)^{-1}$.
Then $f'(q) =\tr(M_2M_1)$ and
$$f''(q) = \tr(-M_1M_2M_1M_2+ 2M_2A).$$
For $q \in [-1, 1]$, $M_2$ is the upper-left block of the inverse of the covariance matrix in \eqref{eq:uTuapprox}, so it is positive semi-definite. Then $M_2^{1/2} M_1M_2 M_1M_2^{1/2}$ is positive semi-definite as well and thus
$$\tr(-M_1M_2M_1M_2) = - \tr(M_2^{1/2} M_1 M_2 M_1 M_2^{1/2}) \leq 0.$$
In addition, $A$ is negative semi-definite, so
$$\tr(2M_2 A) = 2 \tr(M_2^{1/2} A M_2^{1/2}) \leq 0.$$
Therefore, $f''(q) \leq 0$ everywhere, so $q = 0$ is in fact a global optimum.

If $\mathbb{P}(|X_{\bullet}^{\top}\eta| \leq u) > 0$, then $A$ and $B$ are both nonzero, so these two trace inequalities are strict. Then $f''(q) < 0$ for all $q \in [-1, 1]$, so $f$ cannot be constant anywhere. Since $f$ is also concave on $[-1, 1]$, the local optimum at $q = 0$ must be a global optimum on this interval.
\end{proof}

\smallskip
\begin{proof}[\bf Proof of \Cref{thm:gausmeaninfo}]
To prove Theorem \ref{thm:gausmeaninfo}, we begin with two lemmas. We write $\varphi$ for the $\dnorm(0,1)$ probability density function.  We start our study of efficiency by finding an expression for $\alpha_j$.

\begin{lemma}\label{prop:alphasphericalgaussian}
Let $P_X$ be the $\dnorm(0,I_d)$ distribution, let $\alpha_j$ be given by~\eqref{eq:alphadef} and let $Z_i$ be sampled according to~\eqref{eq:probs} for a nonzero vector $\eta\in\bbR^d$ and $u\ge0$. Then
$$
\alpha_j =2\frac{\eta_j}{\Vert\eta\Vert}\varphi\Bigl(\frac{u}{\Vert\eta\Vert}\Bigr)
$$
for $j=1,\dots,d$.
\end{lemma}
\begin{proof}
The result is easy if $\eta_j=0$.
Without loss of generality, assume that $\eta_j > 0$. Let $x_{-j}$ and $\eta_{-j}$ be the vectors in $\bbR^{d - 1}$ formed by removing the $j$th component from the vectors $x$ and $\eta$, respectively.
Using $\varphi'(t)=-t\varphi(t)$,
\begin{align*}
\bbE[x_j (\indic\{x^{\top}\eta \geq u\} )]
&= \bbE\left[x_j \indic\{x_j \geq (u - x_{-j}^{\top} \eta_{-j} )/\eta_j \} \right] \\
&= \bbE\left[ \varphi( (u-x_{-j}^{\top}\eta_{-j})/\eta_j)\right]
\end{align*}
and applying it a second time along with symmetry of $\varphi$, we get
\begin{align*}
    \alpha_j  &= \bbE[x_j (\indic\{x^{\top}\eta \geq u\} - \indic\{x^{\top}\eta \leq - u\})] \\
    &=\bbE\left[ \varphi( (u-x_{-j}^{\top}\eta_{-j})/\eta_j)
    +\varphi( (u+x_{-j}^{\top}\eta_{-j})/\eta_j)\right].
\end{align*}
Now let $\tilde{u}_j = u/\eta_j$ and $\tilde{z}_j = x_{-j}^{\top}\eta_{-j}/\eta_j \sim \dnorm( 0, \tau^2)$ with $\tau^2 = {\Vert\eta_{-j}\Vert^2}/{\eta_j^2}$.
Then we get
\begin{align*}
\alpha_j &= \frac{1}{\sqrt{2\pi}}  \frac{1}{\sqrt{2\pi \tau^2}}  \int_{-\infty}^{\infty}\left( e^{-\frac{1}{2} (\tilde{u}_j - \tilde{z}_j)^2} + e^{-\frac{1}{2} (\tilde{u}_j + \tilde{z}_j)^2}\right) e^{-\tilde{z}_j^2/2\tau^2} \rd\tilde{z}_j \\
&= \frac{1}{2\pi \sqrt{\tau^2}} \left( \frac{\sqrt{2\pi \tau^2}}{\sqrt{\tau^2 + 1}} e^{\frac{-\tilde{u}^2}{2(\tau^2 + 1)}} + \frac{\sqrt{2\pi \tau^2}}{\sqrt{\tau^2 + 1}} e^{\frac{-\tilde{u}^2}{2(\tau^2 + 1)}}\right) \\
&= \sqrt{\frac{2}{\pi}} \frac{\eta_j}{\Vert\eta\Vert_2} e^{\frac{-u^2}{2 \Vert\eta\Vert_2^2}}.\qedhere
\end{align*}
\end{proof}

\begin{lemma}\label{prop:alphageneralgaussian}
Let $P_X$ be the $\dnorm(0,\Sigma)$ distribution for a positive definite matrix $\Sigma$, let $\alpha_j$ be given by~\eqref{eq:alphadef} and let $Z_i$ be sampled according to~\eqref{eq:probs} for a nonzero vector $\eta\in\bbR^d$ and $u\ge0$. Then
\begin{align} \label{eq:alphagauss}
\alpha
= \sqrt{\frac{2}{\pi}} \frac{\Sigma \eta}{\sqrt{\eta^{\top} \Sigma \eta}} e^{\frac{-u^2}{2 \eta^{\top}\Sigma\eta}}
=2\frac{\Sigma\eta}{\sqrt{\eta^{\top}\Sigma\eta}}\varphi\biggl(\frac{u}{\sqrt{\eta^{\top}\Sigma\eta}}\biggr).
\end{align}
\end{lemma}
\begin{proof}
For the general case of $x \sim \dnorm(0, \Sigma)$, with $\Sigma$ any positive-definite matrix, we define $W = \Sigma^{1/2}$ and write $x = W z$. Then $z \sim \dnorm(0, I_d)$, and so
\begin{align*} \alpha_j &= \bbE[x_j (\indic\{x^{\top}\eta \geq u\} - \indic\{x^{\top}\eta \leq - u\})] \\
&= \bbE[W_j^{\top} (z \indic\{z^{\top} W \eta \geq u\} - \indic\{z^{\top} W \eta \leq - u\})] \\
&= W_j^{\top} \bbE[z (\indic\{z^{\top}W\eta \geq u\} - \indic\{z^{\top}W\eta \leq - u\})].
\end{align*}
This reduces the problem to the case $\Sigma = I_d$ with $\eta$ replaced by $W\eta$, so we obtain
\begin{equation*}
\alpha= W \biggl(\sqrt{\frac{2}{\pi}} \frac{W \eta}{\Vert W \eta\Vert_2} e^{\frac{-u^2}{2\Vert W \eta\Vert_2^2}} \biggr).\qedhere
\end{equation*}
\end{proof}
\noindent
We can now conclude the proof of Theorem \ref{thm:gausmeaninfo}. By Lemma~\ref{prop:alphageneralgaussian}, we have
\begin{align} 
\alpha
= \sqrt{\frac{2}{\pi}} \frac{\Sigma \eta}{\sqrt{\eta^{\top} \Sigma \eta}} e^{\frac{-u^2}{2 \eta^{\top}\Sigma\eta}}
=2\frac{\Sigma\eta}{\sqrt{\eta^{\top}\Sigma\eta}}\varphi\biggl(\frac{u}{\sqrt{\eta^{\top}\Sigma\eta}}\biggr).
\end{align}
Therefore,
\begin{align*} \det\left(\bbE\left(\frac{1}{n} U^{\top}U\right)\right) &= (1 - \alpha^{\top}\Sigma^{-1}\alpha)^2 \det(\Sigma)^2
= \left(1 - \frac{2}{\pi} e^{\frac{-u^2}{\eta^{\top}\Sigma\eta}}\right)^2 \det(\Sigma)^2. \qedhere
\end{align*}
\end{proof}

\smallskip
\begin{proof}[\bf Proof of \Cref{thm:fewlevels}]
In the optimization problem \eqref{eq:mon_constrained}, $\pi$ is a fixed permutation in $S_n$ corresponding to a monotonicity constraint, and so it is no loss of generality to take $\pi$ to be the identity and omit it. If $\mu = -1$ or $1$, then clearly the only solution is the constant vector $q = \mu$ (all treated or all control), so we ignore that case henceforth. 

Because we aim to show that optimal solutions have constant levels in $q$, it will be easier to reparametrize in terms of the increments $c_i = q_i - q_{i - 1}$ and show sparsity, where we take $c_1 = q_1$.  Letting 
$$G_0 = \begin{bmatrix} \tilde{X}^{\top}\tilde{X} & 0 \\ 0 & \tilde{X}^{\top}\tilde{X} \end{bmatrix},\quad
G_k = \sum_{i = k}^n \begin{bmatrix} 0 & \tilde{x}_i \tilde{x}_i^{\top} \\ \tilde{x}_i \tilde{x}_i^{\top} & 0 \end{bmatrix},$$ 
and $G(c) = G_0 + c_1G_1 + \cdots + c_n G_n$, we can rewrite this problem as 
\begin{align} \label{eq:mon_constrained_reparam} \begin{split}
\underset{c}{\min } &-\log \det(G(c)) \\ 
\text{such that }  &c_1 \geq -1, \\
&c_k \geq 0 \text{ for all } 2 \leq k \leq n, \\
&c^{\top}1 \leq 1, \\ 
&c_1 + \frac{n - 1}{n} c_2 + \frac{n - 2}{n} c_3 + \cdots + \frac{1}{n} c_n = \mu. \end{split} \end{align} 
The first three constraints correspond to $q \in [-1, 1]^n$ and the monotonicity condition, while the last constraint corresponds to $\overline{q} = \mu$. Let $v$ be the vector with $v_k = (n - k + 1)/n$, so that the last constraint in~\eqref{eq:mon_constrained_reparam} is $c^{\top} v = \mu$. Moreover, let 
$$A = \begin{bmatrix} -1 & 0 & \ldots & 0 \\ 0 & -1 & \ldots & 0 \\ \vdots & \vdots & \ddots & \vdots \\ 0 & 0 & \ldots & -1 \\ 1 & 1 & \ldots & 1 \end{bmatrix} \in \bbR^{(n + 1) \times n} \quad\text{and}\quad d = \begin{bmatrix} 1  \\ 0 \\ \vdots \\ 0 \\ 1 \end{bmatrix} \in \bbR^{n + 1}$$
so that the inequality constraints can be written simply as $Ac - d \leq 0$.  Slater's condition holds by considering the interior point 
$$c = \Bigl(\mu - (n - 1)\epsilon, \frac{n}{n - 1}\epsilon,\frac{n}{n-2}\epsilon,\dots,\frac{n}2\epsilon, n\epsilon\Bigr)$$ for any $\epsilon > 0$ sufficiently small, so the KKT conditions are both necessary and sufficient. The Lagrangian is 
$$L(c; \lambda, \nu) = -\log \det(G(c)) + \sum_{i = 1}^{n + 1} \lambda_i (a_i^{\top}c - d_i) + \nu (c^{\top}v - \mu),$$
where $a_i$ is a column vector formed from the $i$'th row of $A$. The differential is then 
\begin{align*} \frac{\partial L}{\partial c_k} &= -\tr(G(c)^{-1} G_k) +  \sum_{i = 1}^{n + 1} \lambda_i a_{ik} + \nu v_k \\ 
&= -2\sum_{i = k}^{n} \tilde{x}_i^{\top} G(c)^{-1}_{12} \tilde{x}_i - \lambda_k + \lambda_{n + 1} + \frac{n - k + 1}{n} \nu. 
\end{align*} 
Here, $G(c^*)^{-1}_{12} \in \bbR^{(d + 1) \times (d + 1)}$ is the off-diagonal block of $G(c^*)^{-1}$, which is symmetric since the off-diagonal block of $G(c^*)$ itself is symmetric. 

The stationarity condition is that the partial derivative above is zero for all $k$. Suppose first that, for the optimal $c^*$, \{$c_1^*, c_2^*,\ldots, c_m^*$\} are all nonzero, where $m$ is thus far unspecified. Later, we relax the condition that it is the first $m$ components of $c$ specifically that are nonzero.

The complementary slackness condition implies that $c_k^* \lambda_k = 0$ for all $1 \leq k \leq n$, so in order for this to hold we must have $\lambda_1=\lambda_2=\cdots=\lambda_m=0$. The stationarity condition for these terms is then 
\begin{equation} \label{eq:KKT} \sum_{i = k}^{n} \tilde{x}_i^{\top} G(c^*)^{-1}_{12} \tilde{x}_i = \lambda_{n + 1}^* + \frac{n - k + 1}{n} \nu^*,\quad \text{ for all } 1 \leq k \leq m.
\end{equation}
The right-hand side of \eqref{eq:KKT} is linear in $k$, with slope and intercept that depend on $\lambda_{n + 1}^*$ and $\nu^*$. These constitute two free parameters on the right-hand side. For a fixed matrix $G(c)$, this is therefore a probability zero event whenever $m \geq 3$, since $x_1,\ldots, x_n$ are sampled IID from a density. However, the issue here is that $c^*$ is not fixed but rather an implicit function of the data matrix $X$. Thus, we instead show that, for $m$ sufficiently large and with probability one, there is \textit{no} symmetric matrix $M \in \bbR^{(d + 1) \times (d + 1)}$ and pair $(\lambda_{n + 1}^*, \nu^*)$ with 
\begin{equation} \sum_{i = k}^{n} \tilde{x}_i^{\top} M \tilde{x}_i = \lambda_{n + 1}^* + \frac{n - k + 1}{n} \nu^*,\quad \text{ for all } 1 \leq k \leq m.
\end{equation}
Here, the logic is analogous: there are $\binom{d + 2}{2}$ free parameters for an arbitrary symmetric matrix $M \in \bbR^{(d + 1) \times (d + 1)}$ , as well as two additional parameters from $\lambda_{n + 1}^*$ and $\nu^*$. Thus, if $m \geq \binom{d + 2}{2} + 3$, this cannot happen with probability one. The logic is the same for any size $m$ subset of $\{\tilde{x}_1,\ldots, \tilde{x}_n\}$, not just the first $m$, so the result follows by a union bound. 

\end{proof} 

\end{document}